\documentclass[10pt]{article}
\usepackage[english]{babel}
\usepackage[ansinew]{inputenc}
\usepackage[T1]{fontenc}

\usepackage{amsmath, amssymb}
\usepackage{amsthm}

\usepackage{siunitx}

\usepackage[backend=bibtex8]{biblatex}
\usepackage[babel]{csquotes}
\usepackage{framed}
\bibliography{D:/Promotion/02_Bibliothek/04_Literaturverzeichnis/Literatur}

\usepackage{csquotes}
\usepackage{pdfpages}

\usepackage{listings}

\newtheorem{prop}{Property}

\usepackage{pgfplots,xstring}
\pgfplotsset{compat=newest}
\usepackage{times}
\usepackage[author={Tobias Sehnke}]{pdfcomment}
\usepackage{authblk}
\usepackage[top=0.75in, bottom=1.0in, left=0.75in, right=0.75in]{geometry}
\setlength{\columnsep}{0.25in}

\title{Temporal Properties in Component-Based Cyber-Physical Systems \\ Appendix}
\author[1]{Tobias Sehnke}
\author[1]{Matthias Schultalbers}
\author[2]{Rolf Ernst}

\affil[1]{Powertrain Mechatronics, IAV GmbH, \{tobias.sehnke | matthias.schultalbers\}@iav.de}
\affil[2]{Institute of Computer and Network Engineering, Technische Universität Braunschweig, ernst@ida.ing.tu-bs.de}

\date{}


\begin{document}
\maketitle

In this document, we provide supplementary material to \cite{Sehnke2018}, which includes a more detailed description of the requirement transformations, outlined in Section 4.2 of the paper. For this purpose, we also provide a formal description of the temporal semantics model. 

\section{The Temporal Semantics Model}

\subsection{Events and Signals}

The temporal semantics model represents software as a composed set of components $C=\left\{c_1,c_2,\ldots\right\}$. An implementation of a component consists of a set of behaviors and ports, also called interfaces. Each behavior assigns values to output ports $\underline{\mathbf{y}}$ based on inner states and the values on the input ports $\underline{\mathbf{u}}$. The sampling ports $\underline{\mathbf{s}}$ and actuation ports $\underline{\mathbf{z}}$ provide a link to the physical environment. Throughout this document we use $\underline{\mathbf{x}}$ as a placeholder for any port, while $\mathbf{\underline{x}_{(i,j)}}$ addresses the $j^{th}$ interface of the component $c_i$. Each component consists of one or more executable units called runnables, which are assigned to schedulable units called a tasks $\tau$. 

Each occurrences at an interface $\underline{\mathbf{x}}$ is described by a data event $x^k \forall k\in\{1,\ldots,n \}$. An event is defined as the triple $x^k = \left (v_x^k, \hat{t}_x^k, t^k_x \right)$ whereas $v^k_x$ is the \textit{value}, $\hat{t}^k_x$ the \textit{tag} or timestamp and $t^k_x$ the so called \textit{logical timestamp}. The logical timestamp describes the temporal context of the physical state that is represented by the value. 

The ordered set of events that occur at the interface $\mathbf{\underline{x}}$ is called a signal $x = \left(x^1,\ldots,x^n \right) $. To each signal a set of signal paths $e_{x}=\{e_{x}^1,\ldots,e_{x}^n\}$ can be attributed, which describe the information flow to the corresponding interface of a signal. More specifically, a signal path $e_{x}^m \forall m\in\{1,\ldots,n \}$ is an ordered tuple, whose elements can be read, write, sampling or actuator interfaces. The causal relation of events is called the causal chain. To describe this more specifically for a given signal $x$ with a signal path $e^m_x$, which connects a sampling interface $\mathbf{\underline{s}}$ to the port of the signal $\mathbf{\underline{x}}$. Then, the causal chain $P^{(m,k,i)}_x$ describes a set of events $P^{(m,k,i)}_x = \left(s^r,\ldots,x^k\right)$, which are causally related to the event $x^k$. This set includes exactly one event for each interface in $e^m_x$. The set of all causal chains that can be assigned to an event is described by $P_x$. 

If a component changes the temporal context of information, we call this behavior algorithmic delay. The sum of all algorithmic delays in a signal paths is annotated as $d_x^k \forall k\in\{1,\ldots,n \}$, whereas we generally assume that $d_x$ is constant. Given a causal chain $P^{(m,k,i)}_x = \left(s^r,\ldots,x^k\right)$, which relates an event $x^k$ to a sampling event $s^r$, we can compute the logical timestamp formally as the sum of the tag $\hat{t}_s^r$ and the sum of algorithmic delays $d_x^k$
\begin{equation}
t_x^k = \hat{t}_s^r + d_x^k \quad \forall \left( k,r \right) \in P_x.
\label{eq:lgtimes}
\end{equation} 

The behavior of real-time systems can be measured by the \textit{latency} $h$ and the \textit{data event distance} $\Delta \hat{t}$. The latency describes the difference between two tags in a causal event chain. The the latency is the difference between the tags $\hat{t}_s^r$ and $\hat{t}_x^k$:

\begin{equation}
h_x^k = \hat{t}_x^k - \hat{t}_s^r \quad \forall \left(k,r \right) \in P_x.
\label{eq:latency}
\end{equation} 

It is used to describe the age of information. The data event distance describes difference between the occurrences of two events at the same interface  

\begin{equation}
\Delta \hat{t}_x^k = \hat{t}_x^k - \hat{t}_x^{k-1} \quad \forall k \in \left \{1,\ldots,n \right\}.
\label{eq:dataeventdistance}
\end{equation} 

It is used as a measure for the sampling of information.

\subsection{Signal Properties}

In the following we provide formal definitions for the signal properties \textit{sampling rate}, \textit{bandwidth}, \textit{aliasing}, \textit{time delay} and \textit{synchronicity}. We define these properties for individual events and then derive a description for signals. 
We also show how these properties can be related to the known real-time measurements. 

\paragraph{Logical Data Age} The logical data age $a_{x}^k$ of an event $x^k$ is the difference between the tag $\hat{t}_{x}$ and the logical timestamp  $t_{x}$:
\begin{equation}
	a_{x}^k = \hat{t}^k_{x} - t^k_{x} 	
\label{eq:age}
\end{equation}

For the entire signal the logical data age can be described as absolute value $a_x$, if $a_x^k$ is constant. Else, it can be described using a bound of the form
$a_x^- \leq a_x^k \leq a_x^+ \forall k \in \left\{ 1,\ldots,n\right \}$.

The logical data age is similar to the latency. The difference to the latency is that the logical data age also represents algorithmic delays. The relationship between these properties can be obtained by inserting (\ref{eq:lgtimes}) and (\ref{eq:latency}) into (\ref{eq:age}). Thereby we obtain the expression:  
\begin{equation}
\begin{aligned}
a_x^k = \hat{t}_x^k-\hat{t}_s^j +d_x^k = h_x^k + d_x^k.
\end{aligned}
\label{eq:calcage}
\end{equation}

Based on this relationship, we can also determine the logical timestamp from a known latency and occurrence of an event 

\begin{equation}
t_x^k  = \hat{t}_x^k - a_x^k  = \hat{t}_x^k - h_x^k - d_x^k.
\label{eq:calclogical}
\end{equation}

This expression can be obtained by inserting (\ref{eq:calcage}) into the definition of the logical timestamp (\ref{eq:lgtimes}).

\paragraph{Data Synchronicity} The synchronicity of data $\zeta_{x_1,x_2}$ describes the difference of the logical timestamps of two values that are computed simultaneously, such that:
\begin{equation}
	\zeta_{x_1,x_2}^k = t^k_{x_1} - t^k_{x_2} \quad \forall k \in \left\{ 1,\ldots,n\right \}. 	
\label{eq:syncorg}
\end{equation}
This property is again described as a property of an event. Similar to the logical data age, we can express it for the whole signal using a bounded or an absolute expression. The data synchronicity can also be expressed as the difference of the latency of the latencies and the delays in the following form:
\begin{equation}
	\zeta_{x_1,x_2}^k = \left(h^k_{x_1} - d^k_{x_2}\right)-\left(h^k_{x_2}-d^k_{x_2}\right) \quad \forall k \in \left\{ 1,\ldots,n\right \}. 	
\label{eq:sync}
\end{equation}
This expression is obtained by inserting (\ref{eq:calclogical}) into (\ref{eq:sync}) and assuming that the tag is equal. 

\paragraph{Logical Sampling Rate} The logical sampling rate $\Delta t_{x}^k$ of an event $x^k$ measures the difference of its logical timestamp to the logical timestamp of its preceding event: 
\begin{equation}
	\Delta t_{x}^k = t_x^k-t_x^{k-1} \quad \forall k \in \left\{ 1,\ldots,n\right \}  
\label{eq:samplingrate}
\end{equation}
For the entire signal the logical data age can again be described either by an absolute value or by bounds. The logical sampling rate $\Delta t_x^k$ can be expressed as a function of the data event distance and the difference of the latencies in the following form:
\begin{equation}
\Delta t_x^k = t_x^k - t_x^{k-1} = \Delta \hat{t}_x^k - \left(h_x^k - h_x^{k-1}\right)
\label{eq:calcsr}
\end{equation}
This expression is obtained by inserting (\ref{eq:calclogical}) and (\ref{eq:dataeventdistance}) into (\ref{eq:samplingrate}), assuming that the algorithmic delay is constant. 

\paragraph{Logical Band Limit} Given a signal, whose values can be described in a spectrum. Then the logical band limit 
\begin{equation}
	l_{x} = 1/(2 f_{x}^{\max})
\label{eq:highesfreq}
\end{equation}
describes the highest frequency $f_{x}^{\max}$ in which a signal $x$ can have an amplitude that is nonzero. If there exists no spectrum (e.g. if the signal represents a discrete state), the band limit describes a lower bound on the time, in which the signal does not change its values.
As signals can generally not represent frequencies that are larger than their sampling frequency, the band limit of a signal is bounded by the logical sampling rate. An additional bound is provided by the filter operations in the components, described by $g_y$. Thus, the band limit for each pairing  $\left(\mathbf{\underline{y}}, \mathbf{\underline{u}}\right)$ and $\left(\mathbf{\underline{u}}, \mathbf{\underline{y}}\right)$ in a signal path is defined by 
\begin{equation}
		l_u  = \max \left\{l_y, \Delta t_u \right\}, \qquad 		l_{y} = \max {\left \{g_{y}, \Delta t_{y}\right\}},
	\label{eq:bandlimicomp}
\end{equation}
which means that is has to be generally determined iteratively.

\paragraph{Logical Aliasing} Given a signal which is sampled uniformly, aliasing occurs when data is undersampled. This occurs exactly when the sampling rat of a read interface is larger than the band limit of the sender. This occurs exactly when for any pair $\left(\mathbf{\underline{y}}, \mathbf{\underline{u}}\right)$ in the signal path $e_{x}$ the expression
\begin{equation}
	l_{y} \geq \Delta t_{u}
\label{eq:alias}
\end{equation}
is not true. 

\section{Relation of Signal- and Timing Requirements}

In the following we discuss the relation between signal requirements and timing requirements. This enables the transformation of specified requirements into standard formats given that the respective signal paths are known. We assume that constraints on a signal property $v_x^k$ are formulated in a bounded form 
\begin{equation}
v_x^-\leq v_x^k\leq v_x^+
\end{equation}
and that delays and filter parameters are constant. This assumption is realistic, when dealing with control systems which are often periodic.

\begin{prop}
A bounded logical data age constraint of the form $a_x^-\leq a_x^k\leq a_x^+$ will be satisfied if the condition 
\begin{equation}
a_x^- - d_x \leq h_x^k \leq a_x^+ - d_x
\end{equation}
holds for the corresponding set of causal chains $P_x$.
\end{prop}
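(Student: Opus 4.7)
The plan is to use equation (\ref{eq:calcage}), which directly expresses the logical data age as $a_x^k = h_x^k + d_x^k$. Since the proposition assumes constant delays (i.e.\ $d_x^k = d_x$ for all $k$, as stated at the beginning of Section~2 and also assumed in the introductory paragraph before the proposition), this provides an affine relationship between the age and the latency with a constant offset.

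First I would substitute $a_x^k = h_x^k + d_x$ into the given age bound $a_x^- \leq a_x^k \leq a_x^+$, obtaining
\begin{equation*}
a_x^- \leq h_x^k + d_x \leq a_x^+.
\end{equation*}
Subtracting the constant $d_x$ from each side of the inequality then immediately yields the claimed latency bound $a_x^- - d_x \leq h_x^k \leq a_x^+ - d_x$. Since the implication is equivalence of the two inequalities under a constant shift, the condition on $h_x^k$ is in fact both necessary and sufficient.

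The only subtlety worth making explicit is the quantification: equation (\ref{eq:calcage}) and the definition of latency in (\ref{eq:latency}) are formulated relative to a specific causal chain $P^{(m,k,i)}_x$ that relates $x^k$ to a sampling event $s^r$ with $(k,r) \in P_x$. To ensure that the age bound holds for every event $x^k$ regardless of which sampling ancestor in $P_x$ produced it, the latency bound must be required to hold for the corresponding set of causal chains, which exactly matches the formulation of the proposition.

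The argument therefore reduces to a one-line substitution, and no substantive obstacle arises. The main care lies in clearly invoking the constant-delay assumption that allows $d_x^k$ to be treated as a single constant $d_x$, and in handling the quantifier over $P_x$ consistently between the hypothesis and the conclusion.
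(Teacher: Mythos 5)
Your proof is correct and follows essentially the same route as the paper's: substituting the relation $a_x^k = h_x^k + d_x^k$ from (\ref{eq:calcage}) into the age bound and subtracting the constant delay $d_x$. Your added remarks on the constant-delay assumption and the quantification over $P_x$ only make explicit what the paper leaves implicit.
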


\begin{proof}
This property can be derived by replacing $a_x^k$ with (\ref{eq:calcage}) and subtracting $d_x$. 
\end{proof}
The key aspect of this statement is that a requirement on the logical data age provides a constrained bound on the latency of the respective event-chain. 

\begin{prop}
A synchronicity constraint of the form $\zeta_{x_2,x_1}^- \leq \zeta_{x_2,x_1}^k \leq \zeta_{x_2,x_1}^+$ will be satisfied if the condition 
\begin{equation}
\zeta_{x_2,x_1}^- + d_{x_2} - d_{x_1}  \leq h_{x_2}^k - h_{x_1}^k \leq \zeta_{x_2,x_1}^+ + d_{x_2} - d_{x_1}
\end{equation}
holds for the corresponding causal chains $P_{x_1}, P_{x_2}$.
\end{prop}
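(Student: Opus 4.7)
The plan is to proceed by direct algebraic manipulation in complete analogy with the preceding proposition: rewrite the per-event synchronicity as a combination of latencies and algorithmic delays, and then isolate the latency difference. Concretely, I would first invoke equation~(\ref{eq:sync}) with the roles of $x_1$ and $x_2$ exchanged, yielding an expression of the form $\zeta_{x_2,x_1}^k = (h_{x_2}^k - h_{x_1}^k) - (d_{x_2} - d_{x_1})$, where the standing assumption of constant algorithmic delays along the causal chains $P_{x_1}$ and $P_{x_2}$ lets me drop the per-event superscript on the $d$-terms.

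Next, I would substitute this identity into the hypothesised two-sided bound $\zeta_{x_2,x_1}^- \leq \zeta_{x_2,x_1}^k \leq \zeta_{x_2,x_1}^+$ and shift the constant $d_{x_2} - d_{x_1}$ to both outer sides. This produces exactly the bound on $h_{x_2}^k - h_{x_1}^k$ claimed by the proposition. Since each step is an equivalence, the ``if'' implication is immediate; in fact the two conditions turn out to be logically equivalent, although the statement only requires one direction.

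The main subtlety, and the only place where I would slow down, is the hypothesis underlying~(\ref{eq:sync}), namely that the events $x_1^k$ and $x_2^k$ being compared share the same tag, i.e.\ are ``computed simultaneously'' in the sense of definition~(\ref{eq:syncorg}). I would flag this explicitly and note that the proposition implicitly quantifies over pairs of causal chains in $P_{x_1}$ and $P_{x_2}$ that are matched by this simultaneity condition, in the same way that the previous proposition quantified over a single causal-chain set $P_x$. Apart from keeping this pairing straight, no real obstacle is expected: the entire argument is one substitution followed by a translation of the bounds by a constant.
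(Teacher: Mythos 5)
Your proposal is correct and follows essentially the same route as the paper: substitute the latency--delay expression for the logical timestamps (the paper cites (\ref{eq:calclogical}), you cite its consequence (\ref{eq:sync}) with $x_1$ and $x_2$ swapped) into the synchronicity bound and translate by the constant $d_{x_2}-d_{x_1}$. Your explicit flagging of the equal-tag assumption and of the pairing of causal chains in $P_{x_1}$ and $P_{x_2}$ is a point the paper only states informally after (\ref{eq:sync}), so no gap there.
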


\begin{proof}
We obtain this property by replacing $\zeta_{x_2,x_1}^k$ in the synchronicity constraint by (\ref{eq:calclogical}) and subtracting of the delays $d_{x_1}$ and $d_{x_1}$.  
\end{proof}
Thus, to ensure that the data is synchronous according to the constraint, it is necessary to ensure that the relative latency of the corresponding event chains stays inside of certain bounds. In AUTOSAR, this can be addressed by a constraint on the synchronicity of event chains. 

\begin{prop}
A logical sampling rate constraint of the form $\Delta t_x^- \leq \Delta t_x^k \leq \Delta t_x^+$ will be satisfied if the condition 
\begin{equation}
\Delta t_x^- \leq \left(\hat{t}_x^k - \hat{t}_x^{k-1}\right) - \left(h_{{x}}^k - h_{{x}}^{k-1} \right) \leq \Delta t_x^+
\end{equation}
holds for all events in the corresponding causal chain $P_{x}$.
\end{prop}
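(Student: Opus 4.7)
The plan is to reduce the statement to the substitution identity already established in equation~(\ref{eq:calcsr}), since both bounds of the original constraint are linear in $\Delta t_x^k$.

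First, I would start from the given constraint $\Delta t_x^- \leq \Delta t_x^k \leq \Delta t_x^+$ and recall that equation~(\ref{eq:calcsr}) expresses the logical sampling rate as
\begin{equation*}
\Delta t_x^k = \Delta \hat{t}_x^k - \left(h_x^k - h_x^{k-1}\right),
\end{equation*}
which is valid under the standing assumption that the algorithmic delay $d_x$ is constant (so that the $d_x^k$ terms from (\ref{eq:calclogical}) cancel when forming $t_x^k - t_x^{k-1}$). Next, I would apply the definition of the data event distance~(\ref{eq:dataeventdistance}) to replace $\Delta \hat{t}_x^k$ by $\hat{t}_x^k - \hat{t}_x^{k-1}$, yielding the expression inside the claimed inequality. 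Substituting this into both sides of the original bound gives exactly the condition stated in the proposition.

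The argument is a direct substitution and the only genuinely non-trivial input is the cancellation of algorithmic delays that produced~(\ref{eq:calcsr}); since that cancellation relies on $d_x^k$ being constant, the main thing to verify is that the proposition's overarching assumption (constant delays and filter parameters, stated at the start of Section~2) is indeed in force here. I do not anticipate any real obstacle beyond flagging this assumption, since the remainder is routine term manipulation that mirrors the two preceding propositions.
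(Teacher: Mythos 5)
Your proposal is correct and follows essentially the same route as the paper: the paper's proof is simply the substitution of $\Delta t_x^k$ by the expression in (\ref{eq:calcsr}), which is exactly your first step, and your expansion of $\Delta \hat{t}_x^k$ via (\ref{eq:dataeventdistance}) together with the explicit note that the constant-delay assumption underlies the cancellation only makes the paper's one-line argument more precise.
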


\begin{proof} To obtain this expression we replace the expression $\Delta t_x^k$ with (\label{calcsr}).
\end{proof}
Note, that the logical sampling rate constraint addresses a simultaneous requirement on the difference of the latencies and the difference of the tags of two consecutive events. 
 
\begin{prop}
Thus, a band limit constraint of the form $l_x^-\leq l_x^k\leq l_x^+$ can only be satisfied if the condition 
\begin{equation}
	l_x^- \geq \Delta t \geq \Delta \hat{t}_x^k - \Delta h_{x}^k 
\label{eq:blreq}
\end{equation}
is true. 
\end{prop}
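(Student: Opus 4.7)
The plan is to combine the two structural identities already at hand: the recursive band-limit definition (\ref{eq:bandlimicomp}), which pins $l_x^k$ from below by the logical sampling rate $\Delta t_x^k$, and the reformulation (\ref{eq:calcsr}) of $\Delta t_x^k$ as $\Delta \hat{t}_x^k - (h_x^k - h_x^{k-1})$. The claimed inequality in (\ref{eq:blreq}) is then reached by stacking these two observations underneath the bound assumed on $l_x^k$, in the same spirit as the three previous propositions.

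First I would apply (\ref{eq:bandlimicomp}) at the read interface of $\underline{\mathbf{x}}$. Because the band limit is defined as a $\max$ that always contains the local sampling-rate term, we obtain $l_x^k \geq \Delta t_x^k$ independently of the upstream band limit $l_y$ or the filter cut-off $g_y$. Combined with the assumed bounded constraint $l_x^- \leq l_x^k \leq l_x^+$, this transports the bound on $l_x^k$ into a bound on $\Delta t_x^k$, producing the middle relation of (\ref{eq:blreq}). Second, I would substitute (\ref{eq:calcsr}) to rewrite $\Delta t_x^k$ in terms of the tag-distance $\Delta \hat{t}_x^k$ and the latency-difference $\Delta h_x^k = h_x^k - h_x^{k-1}$, yielding the right-hand inequality of the claimed chain directly.

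The hard part will be handling the recursive nature of (\ref{eq:bandlimicomp}). That definition unfolds iteratively along every $(\underline{\mathbf{y}},\underline{\mathbf{u}})$ and $(\underline{\mathbf{u}},\underline{\mathbf{y}})$ pairing in the signal path $e_x$, so I would argue by induction along the path that the local $\Delta t$ contribution survives as a lower bound at every stage, regardless of which argument of the $\max$ dominates at each step. Once this monotonicity is secured, the substitution via (\ref{eq:calcsr}) is routine algebra and requires only the standing assumption that filter parameters and algorithmic delays are constant, matching the style of the earlier proof sketches in this section.
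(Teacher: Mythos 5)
Your two ingredients --- the max in (\ref{eq:bandlimicomp}), which gives $l_x^k \geq \Delta t_x^k$ at the read interface, and (\ref{eq:calcsr}), which rewrites $\Delta t_x^k$ as $\Delta \hat{t}_x^k - (h_x^k - h_x^{k-1})$ --- are exactly the ones the paper's proof leans on, and your second step is sound: since (\ref{eq:calcsr}) is an equality, the right-hand part of (\ref{eq:blreq}) follows immediately. The induction along the signal path is unnecessary here, though: the proposition only concerns the interface $\mathbf{\underline{x}}$ itself, and (\ref{eq:bandlimicomp}) already yields $l_x^k \geq \Delta t_x^k$ locally whichever argument of the max dominates; propagation along the chain only becomes relevant for the aliasing property that follows.

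The genuine gap is in the step you describe as producing the middle relation. From $l_x^k \geq \Delta t_x^k$ together with $l_x^- \leq l_x^k \leq l_x^+$, the only inequality transitivity gives you is $\Delta t_x^k \leq l_x^k \leq l_x^+$, i.e.\ a bound by the \emph{upper} limit, whereas (\ref{eq:blreq}) asserts $\Delta t \leq l_x^-$. The lower bound $l_x^- \leq l_x^k$ points the wrong way and cannot be ``transported'' onto $\Delta t_x^k$. The paper closes this distance not by order-theoretic chaining but by an achievability argument: for the signal to be \emph{able} to carry frequency content up to $1/(2 l_x^-)$ --- the most demanding band limit the constraint admits --- the sampling must not already force the band limit above $l_x^-$, and since $l_x^k = \max\{\cdot, \Delta t_x^k\}$ this requires $\Delta t_x^k \leq l_x^-$ regardless of the filter cut-offs. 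If you want the proposition as stated, you must make this ``necessary for the lower end of the constraint to be reachable'' reading explicit; as written, your formal combination only establishes the weaker condition with $l_x^+$ in place of $l_x^-$.
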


\begin{proof}
In order to enable that a signal may have a specified band limit, it is necessary that the signal is sampled fast enough to represent this frequency. This is because frequencies can not be represented below the sampling rate. Therefore, the sampling rate provides a lower bound of the band limit, which can be concluded from (\ref{eq:bandlimicomp}). Hence we require $\Delta t_x<l_x^-$ is true.
\end{proof}
Note, that the sampling rate in itself can not lower the band limit. This means that an upper bound can only be enforced by the cut-off frequencies of the filters in the signal path.  


\begin{prop}
Given a no-aliasing constraint on the interface $\mathbf{\underline{x}}$ and the respective signal flow $e_{x}$. Lets assume that we can derive a subset from each signal path in $e_x$ of the form $B_x = (\mathbf{\underline{s}}, \mathbf{\underline{y}},\ldots, \mathbf{\underline{u}_{x}})$, which includes interfaces referenced to sampling and resampling behaviors and the specified interface itself. Then the no aliasing constraint will be satisfied if for any pair $(\mathbf{\underline{y}},\mathbf{\underline{u}}) \in B_x$ the condition:
\begin{equation}
	l_{y} \geq \Delta t_{u} \geq \Delta \hat{t}_{u}^k - \Delta h_{u}^k
\end{equation}
holds for all events in the respective casual chain.
\end{prop}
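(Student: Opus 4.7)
The plan is to chain together two facts already established in the document: the definition of aliasing in equation~(\ref{eq:alias}) and the latency-based rewriting of the logical sampling rate in equation~(\ref{eq:calcsr}), which was used in the preceding band-limit proposition. First, I would unfold the no-aliasing condition via~(\ref{eq:alias}): absence of aliasing along $e_x$ means that for every pair $(\mathbf{\underline{y}}, \mathbf{\underline{u}})$ in the signal path, the inequality $l_{y} \geq \Delta t_{u}$ must be satisfied.

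The second step is to argue that it is sufficient to enforce this inequality on the restricted subset $B_x$ instead of on every pair in $e_x$. The intuition is that aliasing is a phenomenon of (re)sampling; between two consecutive sampling interfaces the logical sampling rate is merely propagated, and by the iterative band-limit construction in~(\ref{eq:bandlimicomp}) the inequality at such intermediate pairs is implied by the one at the most recent upstream sampling pair. Consequently, $B_x$ captures exactly the transitions at which the aliasing condition can fail, and checking all $(\mathbf{\underline{y}}, \mathbf{\underline{u}}) \in B_x$ is equivalent to checking the condition on the entire signal path.

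The third step is to replace $\Delta t_u$ by observable quantities. By~(\ref{eq:calcsr}) the per-event logical sampling rate satisfies $\Delta t_u^k = \Delta \hat{t}_u^k - \Delta h_u^k$, and the preceding band-limit proposition showed that any upper requirement on $\Delta t_u$ translates into a bound of the form $\Delta t_u \geq \Delta \hat{t}_u^k - \Delta h_u^k$ for every event $k$ in the causal chain. Concatenating $l_{y} \geq \Delta t_{u}$ from step one with this inequality yields the claimed chain $l_{y} \geq \Delta t_{u} \geq \Delta \hat{t}_{u}^k - \Delta h_{u}^k$ pointwise in $k$, which is exactly what the proposition asserts.

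The main obstacle I anticipate is the second step: precisely justifying the reduction from the full signal path $e_x$ to the sampling subset $B_x$. This needs a careful invocation of~(\ref{eq:bandlimicomp}) to show that the band-limit/sampling-rate relation is non-decreasing along non-sampling segments, so that the inequality at a sampling pair already implies its analogue at every downstream pair until the next resampling point. The remaining two steps reduce to substitution and inequality chaining and are essentially routine.
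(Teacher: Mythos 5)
Your proposal follows essentially the same route as the paper's own proof: unfold the no-aliasing condition via (\ref{eq:alias}), justify the restriction from the full path $e_x$ to the sampling subset $B_x$ by the monotonicity of the band limit and logical sampling rate along non-filtering segments (the paper asserts this via the observations that the band limit only changes through filtering and the maximum logical sampling rate only increases along a path), and then substitute the latency-based expression (\ref{eq:calcsr}) for $\Delta t_u$. Your version is somewhat more explicit about why (\ref{eq:bandlimicomp}) licenses the reduction to $B_x$, but the decomposition and key facts invoked are the same.
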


\begin{proof}
According to (\ref{eq:alias}) aliasing will occur if the constraint $l_{y} \leq \Delta t_{u}$ is not satisfied for any pair $\left(\mathbf{\underline{y}}, \mathbf{\underline{u}}\right)$ in a signal path. Generally the band limit of a signal can only be changed without aliasing by filtering. Also the maximum logical sampling rate can only increase along a signal path. Therefore, we only need to ensure, that the components that filter the signals read their input values with a sampling rate is not larger than the band limit of the last resampling operation. Given this requirement, the aliasing requirement can be converted into a constraint on the sampling rate for the respective event chains. 
\end{proof}
Note, that our approach requires that the band limit of the sampling interface can be determined. 

\printbibliography

\end{document}